\newtheorem{myprop}{Observation}
\begin{document}

\title{Symmetry Conserving Purification of Quantum States within the Density Matrix Renormalization Group}

\author{A. Nocera  and G. Alvarez}
\affiliation{Computer Science \& Mathematics %
Division and Center for Nanophase Materials Sciences, Oak Ridge National Laboratory, %
 \mbox{Oak Ridge, Tennessee 37831}, USA}

\begin{abstract}
The density matrix renormalization group (DMRG) algorithm was originally designed
to efficiently compute the zero temperature or ground-state properties of one 
dimensional strongly correlated quantum systems. The development of the algorithm
at finite temperature has been a topic of much interest, because of the usefulness
of thermodynamics quantities in understanding the physics of condensed matter
systems, and because of the increased complexity associated with efficiently computing 
temperature-dependent properties. The ancilla method is a DMRG technique that enables
the computation of these thermodynamic quantities. In this paper, we review the
ancilla method, and improve its performance by working on reduced Hilbert spaces and
using canonical approaches. We furthermore
explore its applicability beyond spins systems to t-J and Hubbard models.
\end{abstract}

\pacs{05.10.Cc~71.10.Fd~71.27.+a~74.25.Bt}

\maketitle
\newpage

\section{Introduction}
The behavior of low dimensional strongly correlated electron systems at finite
temperature has became a topic of great interest amongst the condensed matter
physics community. Tuning the temperature can lead to interesting phenomena
which cannot be fully understood in terms of ground state physics,
such as the quantum critical regime,\cite{Sachdev2007} the
transition to a spin incoherent Luttinger liquid,\cite{Fiete2007} or the sudden
emergence of a single spinon dispersion in XXZ-like spin-chain
materials.\cite{James2009,Nagler1983,Villain1975} On the experimental front, recent
advances\cite{Zhu2005,Lake2005,Xu2007,Zvyagin2012}
have enabled very precise measurements of both static and dynamical correlation
functions. These experimental results are challenging the available theoretical and computational 
techniques,\cite{Grossj2009,Sandvik1991,Sandvik2010}
increasing demand to develop efficient and accurate numerical
methods for the investigation and eventual prediction of the thermodynamic quantities.

In one dimension, the density matrix renormalization group~\cite{White1992,White1993,Schollwock2011} (DMRG)
is the most powerful method to calculate ground state properties of strongly
correlated systems. The DMRG has been
successfully extended to treat real time dynamics~\cite{Feiguin2005,Kollath2004,Vidal2004,Schmitteckert2004}
and systems at finite temperature. These extensions include transfer-matrix 
DMRG,\cite{Sirker2005} minimally 
entangled typical thermal states or METTS,\cite{White2009,Stoudenmire2010,Bonnes2014} 
and the purification
 scheme~\cite{Feiguin2005,Barthel2009,Feiguin2010,Verstraete2004,Zwolak2004,Sirker2005,Karrasch2013,Tiegel2014}
to be used in this paper.

METTS was recently introduced by White in ref.~\onlinecite{White2009},
where he shows that these so-called minimally entangled typical thermal states
efficiently represent the thermal properties of the system of interest.
METTS has been applied to fermionic systems by one of us.\cite{Alvarez2013}
Ref.~\onlinecite{Binder2014} compares the purification scheme to the METTS method,
concluding that METTS is, in general, computationally less efficient than purification.
In particular, it is argued,\cite{Binder2014} that the additional statistical error
source introduced by METTS sampling increases computational costs, especially
at high temperatures. More recently, ref.~\onlinecite{Brugnuolo2015} develops
symmetry conserving METTS, improving the efficiency of the method.
Perhaps METTS and ancilla should be regarded as complementary.
Given these recent developments in DMRG techniques for temperature dependence, the present paper
revisits the ancilla method, with an emphasis on improving its performance, and
exploring its applicability beyond spins systems to t-J and Hubbard models.

The ancilla method was originally introduced within a thermofield 
formalism by Umezawa\cite{Umezawa1982}, and later in DMRG\cite{Zwolak2004,Feiguin2005}.
It is designed to calculate the finite temperature quantum average of an observable,
 by definition a trace over all the eigenstates of the physical chain,
and not accessible with standard DMRG.
To this aim, the above quantity is expressed 
as a quantum average over a pure state--- a thermal vacuum---in an enlarged 
Hilbert space.
The Hilbert space is enlarged by adding an ancilla site to each site of the
physical lattice, thus obtaining a two-leg ladder geometry. The
ancilla's degrees of freedom are added in such a way that the reduced
density matrix of the physical sites reproduces the thermal density matrix.

The purification scheme starts with an infinite temperature state, given
by a product of maximally entangled states on each rung of the ladder. For spin
chains,\cite{Feiguin2005} each ancilla can be chosen to have ``opposite'' quantum
numbers to those of the corresponding physical site. This choice of ancillas is
equivalent to applying a time-reversal transformation in the case of spin chains,
or, more generally, a particle-hole transformation on the ancillas, and has its
profound motivation in giving the correct Kubo-Martin-Schwinger 
relations\cite{Suzuki1985,Suzuki1986,Takahashi1975,Umezawa1982}
at finite temperature. The previously described choice has the additional
advantage of reducing the dimension of the Hilbert space of the total
system composed of physical and ancilla chains.
For example, in the case of a Heisenberg model in the absence of a magnetic field, 
one is able to work in a subspace of the enlarged Hilbert space of physical
plus ancilla chains with 
$z$ component of the total spin equal to zero. After the preparation of the 
initial state, a finite temperature state 
is then obtained by evolving in imaginary time with the Hamiltonian of the physical chain.

In this paper we show that the above approach, which we define as 
\emph{grand canonical} for the physical chain, is not the most efficient for the
DMRG numerical simulations: the preparation of the initial infinite temperature
state and the subsequent imaginary time evolution can be furthermore
restricted, without any additional truncation of the Hilbert space, to a subspace 
where the $z$ component 
of the total spin  of the physical chain is \emph{also} conserved. 
This subspace has clearly smaller dimension than the one previously mentioned,
the one that conserves \emph{only} the $z$ component of the total (physical
plus ancilla) system.

How can we impose the restriction outlined above? Or, more generally, how can we
work in the \emph{canonical ensemble} for the physical chain?
To answer this question we will show how to engineer an ``entangler'' Hamiltonian
that conserves a given set of quantum numbers of the physical chain,
and generates a maximally entangled infinite temperature state that is different
from the state used in the grand canonical approach. 
Introduced by Feiguin and Fiete in ref.~\onlinecite{Feiguin2010} for a t-J chain, it
turns out that the canonical ``entangler'' Hamiltonian is non-local for the total
system, but involves interactions between the rungs of the total system
at all possible distances. \emph{In the thermodynamic limit,}
the canonical scheme gives the same results as the grand canonical one. 
Yet the performance of the DMRG simulation is remarkably improved, by up
to one order of magnitude with respect to the grand canonical approach.

Section~\ref{sec:heisenberg} introduces the grand canonical and the canonical
purification schemes for the Heisenberg model. 
Section~\ref{sec:tj} then extends the treatment to the case of the t-J model, and section~\ref{sec:hubbard}
 to the Hubbard model.
The  Heisenberg model has only  spin degrees of freedom, but the t-J and Hubbard
models have also charge, leading to  
different ``entangler'' Hamiltonians. For each model, we first consider a local
\emph{entangler,} where neither the total number of electrons, nor the total number 
of spins of the physical chain are conserved; they are conserved only in
average at finite temperature. This is the \emph{grand canonical ensemble}. It needs a 
(temperature-dependent) chemical potential in order to keep constant the \emph{average} 
number of electrons in the physical chain. A magnetic field is likewise needed
to keep constant the 
\emph{average} $z$ component of the total spin (as in the case of a spin chain). 
We then consider a non local \emph{entangler}
where the number electrons is conserved but not the $z$-component of the
spin. Even in this
case, we show that the z-component of the spin for the electrons in
the physical chain is conserved in average during the temperature evolution; 
its value is zero in the absence of a magnetic field.
Finally, we write down an entangler such that \emph{both} the
number of electrons \emph{and} the z-component of the spin of the physical
chain are conserved.
 
\section{The ancilla method for finite temperature DMRG}

For convenience, we here summarize the ancilla approach for finite temperature
DMRG; refs.~\onlinecite{Feiguin2005,re:feiguin0813} have more detail. At $\beta=0$, 
that is, at infinite temperature, a maximally entangled
state $|\psi(\beta=0)\rangle$ between the physical sites of the chain and their
ancillas is initially produced. 
A pure state in the enlarged system at finite temperature is then calculated by
evolving $|\psi(\beta=0)\rangle$ in imaginary time with the Hamiltonian of 
the physical sites, $|\psi(\beta)\rangle = e^{-{\beta H/2}}|\psi(\beta=0)\rangle$. 
Given a generic observable $O$ of the physical chain, the thermodynamic average can be
calculated\cite{Suzuki1985,Suzuki1986,Takahashi1975,Umezawa1982,Feiguin2005} 
in the enlarged space using 
the standard zero temperature expression
\begin{equation}
\langle O \rangle = \frac{\langle \psi(\beta) |O| 
\psi(\beta)\rangle}{\langle \psi(\beta)|\psi(\beta)\rangle}, 
\end{equation}
where the norm $Z(\beta)=\langle \psi(\beta)|\psi(\beta)\rangle$ represents the
partition function at temperature $T\equiv 1/\beta$. 

We present two different purification schemes for the initial infinite
temperature state. The first scheme is referred to as ``grand 
canonical,'' where one conserves only quantum numbers for the enlarged system
given by the physical and the ancilla sites. 
In the second scheme, one conserves quantum numbers \emph{not just} globally,
that is, not just in the enlarged system, 
but in the physical and ancilla chains separately as well.

In the next sections we will often discuss intensive energies of the models considered.
The definition of intensive energy is 
\begin{equation}
\langle E \rangle/L = \frac{1}{L}\frac{\langle \psi(\beta) |H| \psi(\beta)\rangle}
{\langle \psi(\beta)|\psi(\beta)\rangle}, 
\end{equation} 
where $H$ is the Hamiltonian of the model under consideration and $L$ is the 
system size.

\subsection{Heisenberg model}\label{sec:heisenberg}
\subsubsection{Grand-canonical purification scheme}
Let us start considering the case of a spin chain described by the Heisenberg model
\begin{equation}
H_{Heis}=J\sum_{i=0}^{L-2}\vec{S}_{i}\cdot\vec{S}_{i+1},
\end{equation} 
with $\vec{S}=(S^{x},S^{y},S^{z})$, and where the Hilbert space of a single site is 
two-dimensional,
having only two possible states, $|\mathord{\uparrow}\rangle$ and 
$|\mathord{\downarrow}\rangle$. 
Recall briefly the purification scheme adopted in ref.~\onlinecite{Feiguin2005},
which in our classification turns out to have a grand-canonical character. 
For simplicity, first consider the case of two spins, accompanied by their
ancilla sites. The infinite temperature state is
\begin{equation}\label{State2spins}
|\psi_\text{2 spins}(\beta=0)\rangle = \frac12(|\uparrow \downarrow\rangle + 
|\downarrow \uparrow\rangle)
 \otimes (|\uparrow \downarrow\rangle + |\downarrow \uparrow\rangle),
\end{equation}
where $\otimes$ is the direct product of states on the two \emph{composite} sites, each
given by a pair consisting of a physical site and its ancilla. 
In this paper, \emph{the first entry of the ket vector refers to the state of the 
physical site and the second entry to the ancilla site.} For example, ket 
$|\mathord{\uparrow\downarrow}\rangle$, has a spin up in the physical site and a
spin down in the corresponding ancilla site. 
In the state Eq.~(\ref{State2spins}), each ancilla has ``opposite'' quantum
numbers with respect to those of the physical site.
As long as the mapping between the states of a physical site and its ancilla
is one-to-one, then any choice for the mapping of quantum numbers of the ancillas
yields the correct thermodynamics for the physical chain. 

The choice outlined above 
(which is equivalent to applying a time-reversal transformation\cite{Feiguin2005})
has the advantage of reducing the dimension
of the Hilbert space of the total enlarged system. By construction, the state
Eq.~(\ref{State2spins}) lives in the subspace
with spin $S^{z}_{\text{tot}}\equiv\sum_{i} ( S^{z}_{i} + S^{z}_{a(i)})=0$  of the total
physical-plus-ancilla Hilbert space, where 
we have indicated with $a(i)$ the ancilla site corresponding
to the physical site $i$.
For $L$ sites, Eq.~(\ref{State2spins}) generalizes to a so-called product of 
\emph{local} maximally-entangled states
\begin{equation}\label{MaxEntangled}
|\psi(\beta=0)\rangle = \frac{1}{\sqrt\mathcal{N}}\bigotimes_{i=0}^{L-1} 
\sum_{\sigma=\uparrow,\downarrow}|\sigma \bar\sigma\rangle,
\end{equation}
where $\mathcal{N}$ is a normalization constant, 
$\bar\uparrow=\downarrow$ and $\bar\downarrow=\uparrow$. The above state
is an eigenstate of the $z$-component of the total spin 
$S^{z}_{\rm tot}$, but \emph{it is not} an eigenstate of the total spin of either
the physical $S^{z}_{\text{ph.}}\equiv\sum_{i} S^{z}_{i}$
or ancilla sites $S^{z}_{\text{an.}}\equiv\sum_{i} S^{z}_{a(i)}$.

The above observation has important consequences for 
the efficiency of the finite temperature evolution. Indeed, 
within the purification scheme outlined above, 
one can study the thermodynamics of a generic spin chain Hamiltonian in 
the presence of an external magnetic field. 
The average magnetization of the system can be tuned by changing the 
strength of the field at finite temperature. 
This possibility stems from the symmetry property of the initial infinite
temperature state, which conserves
neither
$S^{z}_{\text{ph.}}$ nor $S^{z}_{\text{an.}}$.

A state like Eq.~(\ref{MaxEntangled}) is exponentially large, and would have to
be truncated before time evolving it. Otherwise, one would have to work with a vector of
 size $2^L.$
For the grand canonical scheme, one could build it by growing it slowly
inside a traditional DMRG, and
truncating it along the way. But due to the DMRG transformations,
the book keeping of such a state would be unfeasible to perform. 
It is then clear that with traditional DMRG the entangler
is needed for an efficient representation of the state.
With an MPS approach, the infinite temperature state in Eq.~(\ref{MaxEntangled}) has bond dimension 2, and is not entangled in a global sense, because it is just a product state of singlets on each rung.
How can we then generate state Eq.~(\ref{MaxEntangled}) as a starting point for the
temperature evolution? As proposed by Feiguin and Fiete~in
ref.~\onlinecite{Feiguin2010}, we find useful the notion of ``entangler''
Hamiltonian: a Hamiltonian
having the state Eq.~(\ref{MaxEntangled}) as its ground state. 
By diagonalizing a $4\times4$ matrix, where $4$ is the dimension of the composite 
physical plus ancilla site, the entangler for the Heisenberg model in the 
grand-canonical scheme can be written as
\begin{equation}\label{HeisenbergGC}
H^{spin}_{\text{GC}} = -\sum_{i=0}^{L-1} S^{+}_{i} S^{-}_{a(i)} + {\rm h.~c..}
\end{equation}

We address the results coming from this entangler in the next section,
where we compare the grand canonical and canonical purification schemes.

\subsubsection{Canonical purification scheme}

Let us consider the case of the Heisenberg model in the absence of
a magnetic field. In this case, one would build a 
maximally entangled state with the property that  $S^{z}_{tot}|\phi\rangle=0$,
 $S^{z}_{\text{ph.}}|\phi\rangle=0$, $S^{z}_{\text{an.}}|\phi\rangle=0$, 
such that one conserves the spin of the physical and ancilla 
chains \emph{separately.} 

For the simple case of two spins, a maximally entangled state with the
above characteristics is 
\begin{equation}\label{State2spinsC}
|\phi_\text{2 spins}(\beta=0)\rangle = \frac{1}{\sqrt2}
\big(|\uparrow \downarrow\rangle \otimes |\downarrow \uparrow\rangle + 
|\downarrow \uparrow\rangle \otimes |\uparrow \downarrow\rangle\big).
\end{equation}
For $L$ sites
\begin{equation}\label{MaxEntangledC}
|\psi(\beta=0)\rangle_{C} = \frac{1}{\sqrt{\mathcal{N}'}}P_{(S^{z}_{\text{ph.}}=0)}
\Bigg[\bigotimes_{i=0}^{L-1} \sum_{\sigma=\uparrow,\downarrow}|\sigma \bar\sigma\rangle\Bigg],
\end{equation}
where $\mathcal{N}'$
is a normalization constant, $P_{(S^{z}_{\text{ph.}}=0)}$ is the projector operator such that the 
$z$-component of the \emph{total} spin of the physical (ancilla) chain is conserved
and equal to zero: $S^{z}_{\text{ph.}}|\psi(\beta=0)\rangle_{C}=S^{z}_{\text{an.}}|\psi(\beta=0)\rangle_{C}=0$.
\begin{figure}
\centering
\includegraphics[]{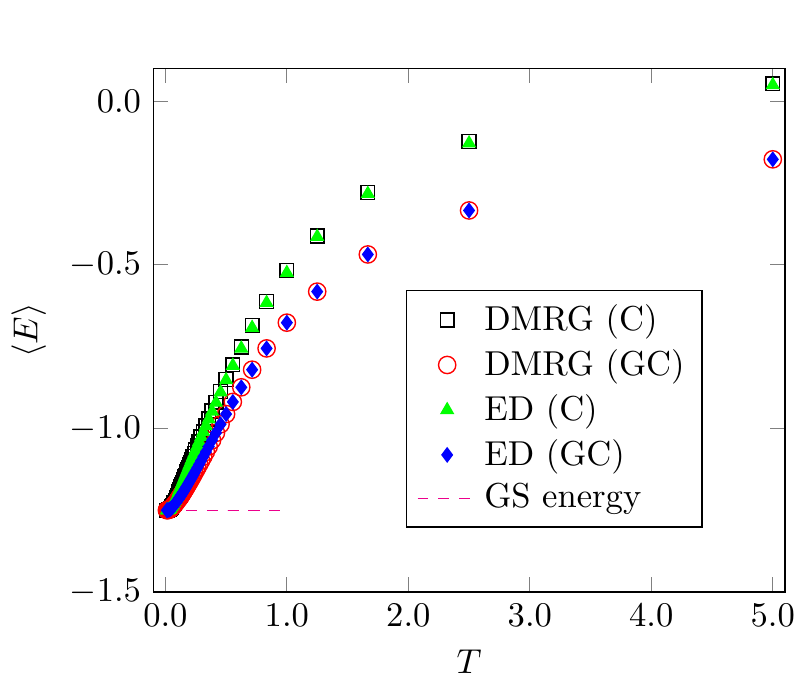}
\caption{(Color online) Intensive energies for a Heisenberg chain with $J=-1$ and
length $L=6$ sites for the canonical and grand canonical approach, and comparing 
DMRG and ED, as indicated.} \label{fig1}
\end{figure}

\begin{figure}
\centering
\includegraphics[]{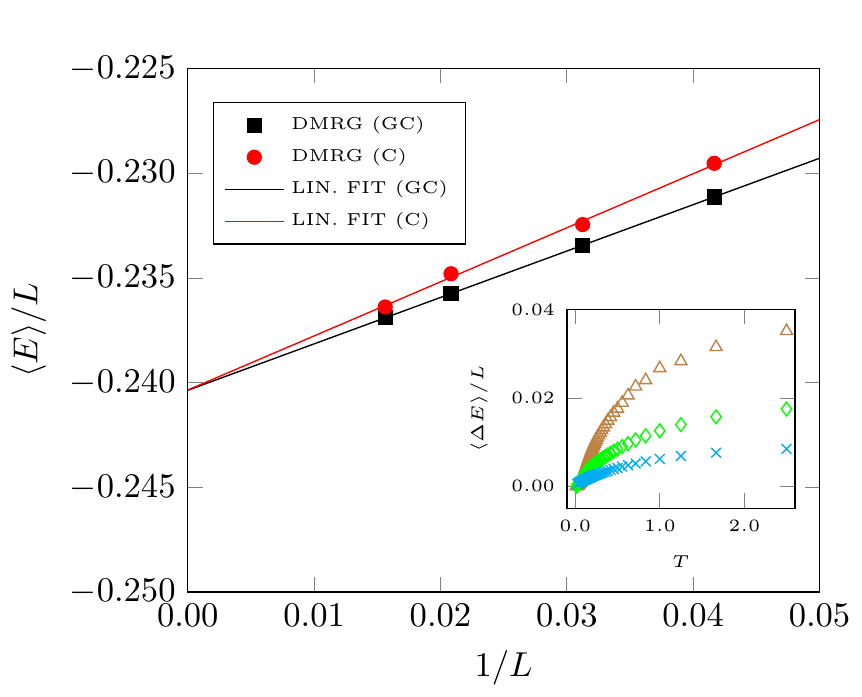}
\caption{(Color online)
Main panel: average intensive energies for the Heisenberg model ($J=-1$) at temperature
$T=0.1$ as a function of the system size comparing the canonical and the 
grand canonical approaches (we consider up to $L=64$ sites). 
The Bethe ansatz yields an energy equal to $-0.240219$ in the thermodynamic limit. Inset: difference 
between the intensive energies obtained in the two approaches, 
$\langle\Delta E\rangle=\langle E_{C}\rangle-\langle E_{GC}\rangle$, as function
of the temperature, for different system sizes: triangles (brown) indicate $L=6$,
diamonds (green) indicate $L=12$, and crosses (cyan) indicate $L=24$.} \label{fig2}
\end{figure}

For an $L$ site chain, the subspace containing 
the maximally entangled state in the canonical approach has dimension $C^{L}_{L/2}$
($C^{k}_{l}$ gives the number of possible combinations of $l$ 
objects on $k$ places), a number clearly smaller than the analogous subspace 
dimension $C^{2L}_{L}$ of the grand canonical case. As will be seen below,
imposing the conservation of $S^{z}_{\text{ph.}}$ 
during the finite temperature evolution 
in the enlarged Hilbert
space substantially reduces the computational effort of the time dependent DMRG 
simulations.\cite{Mcculloch2007}
Conserving this symmetry can thus help improve DMRG 
implementations, because they usually
consider \emph{global} symmetries only.

How can we, in practice, generate the maximally entangled ``canonical'' state 
Eq.~(\ref{MaxEntangledC})? By finding the correct ``entangler'' Hamiltonian,
the one that gives Eq.~(\ref{MaxEntangledC}) as its ground state.
For the Heisenberg model, this ``entangler'' Hamiltonian is
\begin{equation}\label{HeisenbergC}
H^{spin}_{\text{C}} = -\sum_{i\neq j} \Gamma^{\dag}_{i} \Gamma_{j} + {\rm h.~c.},
\end{equation}
where $\Gamma^{\dag}_{i}=S^{+}_{i} S^{-}_{a(i)}$.
Notice that the above Hamiltonian has long-range interactions: each 
rung interacts with all the others. Section \ref{sec:entanglement} discusses the 
entanglement implications caused by long-range interactions in the DMRG calculations.
A proof that the state (\ref{MaxEntangledC}) is the 
ground state of the Hamiltonian (\ref{HeisenbergC}) is provided in appendix~\ref{App:AppendixB}.

In matrix product operator notation,\cite{Schollwock2011}
 $H^{spin}_{\text{C}}=\hat{W}^{[0]}\hat{W}^{[1]}\cdots\hat{W}^{[L-1]}+$h.c.,
where 
\begin{equation}
\hat{W}^{[i]}=\left[\begin{tabular}{lll}
$I$ & $0$ & $\Gamma^\dagger_i$\\
$0$ & $I$ & $0$\\
$0$ & $\Gamma_i$ & $I$\\
\end{tabular}\right],
\end{equation}
for $0<i<L-1$, $\hat{W}^{[0]}=[I,0,\Gamma^\dagger_0]$, and
\begin{equation}
\hat{W}^{[L-1]}=\left[\begin{tabular}{l}
$0$\\
$I$\\
$\Gamma_{L-1}$\\
\end{tabular}\right].
\end{equation}

Fig.~\ref{fig1} shows the intensive energy defined as
as a function of the temperature for a ferromagnetic Heisenberg chain with $L=6$
 sites, and where we assume $|J|=1$ as unit of energy. Fig.~\ref{fig1} compares both
  purification schemes. 
It also compares DMRG with exact diagonalization. 
The figure shows a perfect numerical agreement between DMRG and ED, but 
substantially different results between the 
grand canonical and canonical approaches; the intensive energies calculated within 
the canonical approach are systematically bigger than those calculated in the 
grand canonical scheme. Fig.~\ref{fig1} also shows that
the results obtained with different approaches coincide for $\beta\rightarrow\infty$, 
that is, at zero temperature. 
In appendix~\ref{App:AppendixA}, we prove that for a generic interacting quantum system 
the canonical (when chosen in a symmetry sector containing the ground state)
and grand canonical average energies differ in general, but coincide at 
zero temperature, and in the thermodynamic limit (where they coincide at any temperature).

The DMRG implementation used throughout this paper is discussed in 
the supplemental material which can be found at \url{https://web.ornl.gov/~gz1/papers/55/}.
The algorithm implemented for the
time-dependent part of the DMRG is based on the Krylov space decomposition method, as explained
in ref.~\onlinecite{re:manmana05} and references therein.
An implementation can be found in ref.~\onlinecite{Alvarez2011}. In the time-evolution calculations,
600--800 states per block are kept. The results are well converged with this number of states in the 
entire interval of temperature investigated ($\beta_{min}=0$, $\beta_{max}/2=40$, with $\Delta\beta/2=0.1$).

Fig.~\ref{fig2} substantiates another analogy of the results obtained in the two
purification schemes with those obtained in different ensembles. 
The main plot of fig.~\ref{fig2} shows the average
energies obtained in the canonical and grand canonical approaches as a 
function of the system size at low temperature,
$T=0.1$. The results are different for small system size, but they
converge to the exact Bethe ansatz solution when extrapolated to the thermodynamic limit. 
The inset shows the difference between the average
energies in the grand canonical and canonical approaches, which decreases at \emph{every} temperature 
with increasing system size. On the other end, all the curves 
converge to the same 
value for $\beta\rightarrow\infty$, as already observed in fig.~\ref{fig1}. 
In the supplemental material, which can be found at \url{https://web.ornl.gov/~gz1/papers/55/}, 
we have proved that this is valid for a generic quantum system when the thermodynamic limit is taken.

\subsection{t-J chains}\label{sec:tj}

\subsubsection{Grand canonical purification scheme}

In this section, we apply the theory discussed previously
to the case of a $t-J$ model described by the standard Hamiltonian
\begin{eqnarray}\label{tJ}
H_{tJ} &=&-t\sum_{i=0,\sigma}^{L-2} \big( c_{i,\sigma}^{\dag}c_{i+1,\sigma} + h.c.\big) 
\nonumber\\
&+& J \sum_{i=0}^{L-2} \big(\vec{S}_{i}\cdot\vec{S}_{i+1}- N_{i}N_{i+1}/4\big),
\end{eqnarray}
where $N_{i}=\sum_{\sigma=\uparrow,\downarrow}c_{i,\sigma}^{\dag}c_{i,\sigma}$. The model is characterized by a three-dimensional single-site Hilbert space with states empty $|0\rangle$, single occupied with spin up $|\mathord{\uparrow}\rangle$, and spin down $|\mathord{\downarrow}\rangle$.
We start with the grand-canonical purification scheme. Following the same 
procedure introduced in the spin chains' case, one can straightforwardly 
write down a product of local (along the rungs of the ladder) maximally 
entangled states:
\begin{equation}\label{MaxEntangledtJ}
|\psi(\beta=0)\rangle = \bigotimes_{i=0}^{L-1} \Big[|0,0\rangle+
\sum_{\sigma=\uparrow,\downarrow}|\sigma \bar\sigma\rangle\Big].
\end{equation}
We have verified (by diagonalizing a $9\times9$ matrix,
where 9 is the dimension of a composite physical-plus-ancilla site) that this state can be generated 
by calculating the ground state of the entangler Hamiltonian
\begin{equation}\label{EntanglertJGC}
H^{t-J}_{\text{GC}} = \sum_{i=0}^{L-1} \big(\sqrt{2}\Delta_{i}+S^{+}_{i} S^{-}_{a(i)}\big) +{\rm h.~c.},
\end{equation}
where $\Delta^{\dag}_{i}=(c^{\dag}_{i,\uparrow}c^{\dag}_{a(i),\downarrow}-
c^{\dag}_{i,\downarrow}c^{\dag}_{a(i),\uparrow})/\sqrt{2}$.
As observed in the case of the spin chains, even 
though Eq.~(\ref{MaxEntangledtJ}) conserves the total spin in the enlarged physical plus ancilla combined system, it 
\emph{does not conserve these quantities separately} for the physical 
chain or for the ancilla chain. In order to get the thermodynamic 
properties for the physical chain at finite temperature in the grand canonical 
ensemble, one must add a chemical potential term $\mu$ during the imaginary 
time evolution. To keep 
an average of $N$ electrons in the physical chain, one needs to solve the 
  equation 
\begin{equation}\label{Equatmu}
\langle \psi(\beta,\mu)|N_{\text{ph.}}|\psi(\beta,\mu)\rangle = N, 
\end{equation}
for each temperature $T=1/\beta$, 
where $|\psi(\beta,\mu)\rangle=e^{-{\beta/2} [H_{tJ}-\mu N_{\text{ph.}}]} 
|\psi(\beta=0)\rangle$, with $H_{tJ}$ being the standard $t-J$ Hamiltonian and 
$N_{\text{ph.}}\equiv\sum_{i} N_{i,\rm ph}$. For this reason, when  
studying the thermodynamic properties of the t-J model at fixed 
density and zero magnetic field, the grand canonical 
scheme outlined above has a clear disadvantage in terms of computational cost. 
Therefore, results in the grand canonical 
scheme are calculated only with exact diagonalization for a small system size.
We have verified that the average 
 spin of the physical chain is zero at any 
temperature, $\langle \psi(\beta,\mu)|S^{z}_{\text{ph.}}|\psi(\beta,\mu)\rangle=0$, 
with $\mu$ being a solution of Eq.~(\ref{Equatmu}).

\subsubsection{Canonical purification scheme}\label{subsubsec:tj}

In this section, we address the ``canonical'' purification scheme for the $t-J$ model. 
We present two different schemes which take into account the charge and the
spin symmetries of the physical chain.  We first review a canonical purification
method where one only employs the charge conservation in the physical chain.
This scheme has already been treated by Feiguin and Fiete in ref.~\onlinecite{Feiguin2010} 
in the context of a spin-incoherent Luttinger liquid. 
We consider a t-J chain with $L$ sites, and a even 
number of electrons $N$, such that $N_{\uparrow}=N_{\downarrow}=N/2$. Notice that in the total system given by physical and ancilla sites one has $2N$ electrons.
Up to a constant, we here provide the expression
\begin{equation}\label{MaxEntangledCtJ}
|\psi(\beta=0)\rangle_{\text{C1}} = P_{(N_{\text{ph.}}=N)}\Bigg[\bigotimes_{i=0}^{L-1} 
\big(|0,0\rangle+\sum_{\sigma=\uparrow,\downarrow}|\sigma \bar\sigma\rangle\big)\Bigg]
\end{equation}
for the ``canonical'' maximally 
entangled state in the $L$ sites case,
where $P_{(N_{\text{ph.}}=N)}$ is a projector operator (different from 
the projector that appears in Eq.~(\ref{MaxEntangledC})) such that the total 
number of electrons of the physical (ancilla) chain is conserved, 
$N_{\text{ph.}}|\psi(\beta=0)\rangle_{\text{C1}}=N_{\text{an.}}|\psi(\beta=0)\rangle_{\text{C1}}=N$.
We emphasize that the above state does \emph{not} conserve the the $z$-component 
of the total spin of the physical (ancilla) chain, but it has the property that, at any 
temperature, ${}_{\text{C1}}\langle \psi(\beta)|S^{z}_{\text{ph.}}|\psi(\beta)\rangle_{\text{C1}}=0$. 

We now prove that the ``canonical'' entangler Hamiltonian 
\begin{equation}\label{EntanglertJC1}
H^{t-J}_{\text{C1}} = -\sum_{i\neq j} \Delta^{\dag}_{i} \Delta_{j} + {\rm h.~c.}
\end{equation}
proposed in ref.~\onlinecite{Feiguin2010}
has the property that its ground-state is Eq.~(\ref{MaxEntangledCtJ}).
We begin by observing that the Hamiltonian (\ref{EntanglertJC1}) conserves 
the number of electrons $N$ in the physical and ancilla chains \emph{separately}, 
and assume that the combined physical plus ancilla system has $S_z=0$, so that 
the number of up and down electrons are equal in the
combined system. Let us divide the full Hilbert space basis into ``good states'' where
 all physical sites and their ancillas are correctly paired,
 and ``bad states'' where at least one physical site is not correctly paired to its
ancilla; these disjoint sets are then
\begin{gather}\label{goodbad}
\begin{aligned}
\mathcal{S}_G=&\{|\phi\rangle; \text{basis } |\phi\rangle\text{ with all physical and ancilla}\\
&\text{ sites correctly paired}\}\\
\mathcal{S}_B=&\{|\phi\rangle; \text{basis } |\phi\rangle\text{ with at least one physical site}\\
&\text{ with ancilla incorrectly paired}\}.
\end{aligned}
\end{gather}
Note that if $|\phi\rangle\in\mathcal{S}_G$ then $|\phi\rangle$ is a term in Eq.~(\ref{MaxEntangledCtJ}). 
We shall prove that (i) $\langle \phi'|H^{t-J}_{\text{C1}}|\phi\rangle=0$ if $|\phi'\rangle\in\mathcal{S}_G$
  and $|\phi\rangle\in\mathcal{S}_B$, 
 so that the Hamiltonian matrix $H^{t-J}_{\text{C1}}$ blocks into at least
two blocks: states in $\mathcal{S}_G$ and states in $\mathcal{S}_B$.
We shall furthermore prove (ii) that the ground state of $H^{t-J}_{\text{C1}}$ is in the block
  $\mathcal{S}_G$ as opposed to the block $\mathcal{S}_B$, 
 and (iii)
that the ground state of the block $\langle \phi'|H^{t-J}_{\text{C1}}|\phi\rangle$ for all
$|\phi\rangle,|\phi'\rangle\in\mathcal{S}_G$ is Eq.~(\ref{MaxEntangledCtJ}).

To prove (i) it suffices to think of Eq.~(\ref{EntanglertJC1}) as a 
tight-binding Hamiltonian of ``singlets''
of physical sites and ancillas built along the rungs, singlets that
are hopping along the ladder structure. Eq.~(\ref{EntanglertJC1}) cannot 
connect states in $\mathcal{S}_G$
with those in $\mathcal{S}_B$.

To prove (ii) we first note that if $|\phi\rangle\in\mathcal{S}_B$, then at 
least one site
has occupation $|\sigma,0\rangle$, $|0,\sigma\rangle$ or $|\sigma,\sigma\rangle$.
Electrons in sites like these cannot ``move''
by the action of  Eq.~(\ref{EntanglertJC1}). The subspace
$\mathcal{S}_B$ then blocks into even smaller subspaces, all of size smaller 
than the set $\mathcal{S}_G$.
Within these subspaces, we have hopping-like matrices 
yielding lowest energies larger than that of the block  $\mathcal{S}_G$.

To prove (iii), let us call $H$ the block of matrix Eq.~(\ref{EntanglertJC1}) in the
$\mathcal{S}_G$ subspace.
The Hamiltonian Eq.~(\ref{EntanglertJC1}) conserves the number of electrons in the
 physical system, but not its total spin, so that it can be written in a tridiagonal block form.
The diagonal blocks have rank   
 $C^L_{N_{\uparrow}}C^{L-N_{\uparrow}}_{N-N_{\uparrow}}$ each, and are
 characterized by the number of up electrons. The blocks in the first 
 diagonal above
(below) the main diagonal connect states with $N_{\uparrow}$ and $N_{\uparrow}+1$
($N_{\uparrow}-1$) electrons and are given by $m\times n$ rectangular matrices with 
$m = C^L_{N_{\uparrow}}C^{L-N_{\uparrow}}_{N-N_{\uparrow}}$ and 
$n = C^L_{N_{\uparrow}+1}C^{L-N_{\uparrow}-1}_{N-N_{\uparrow}-1}$ 
($n = C^L_{N_{\uparrow}-1}C^{L-N_{\uparrow}+1}_{N-N_{\uparrow}+1}$).

Now, the matrix elements in the blocks described above are either 0 or -1 for the diagonal blocks while for the off-diagonal ones are 0 and +1. 
Each row has exactly $2N(L-N)$ non-zero entries. 
So does each column.
Then its lowest eigenvalue (by taking into account the factor $\sqrt{2}$ in the
definition of the $\Delta$ operators) is
$-N(L-N)$, with eigenstate $(1,1,\cdots,1)$, that is,  
Eq.~(\ref{MaxEntangledCtJ}).

\begin{figure}
\centering
\includegraphics[]{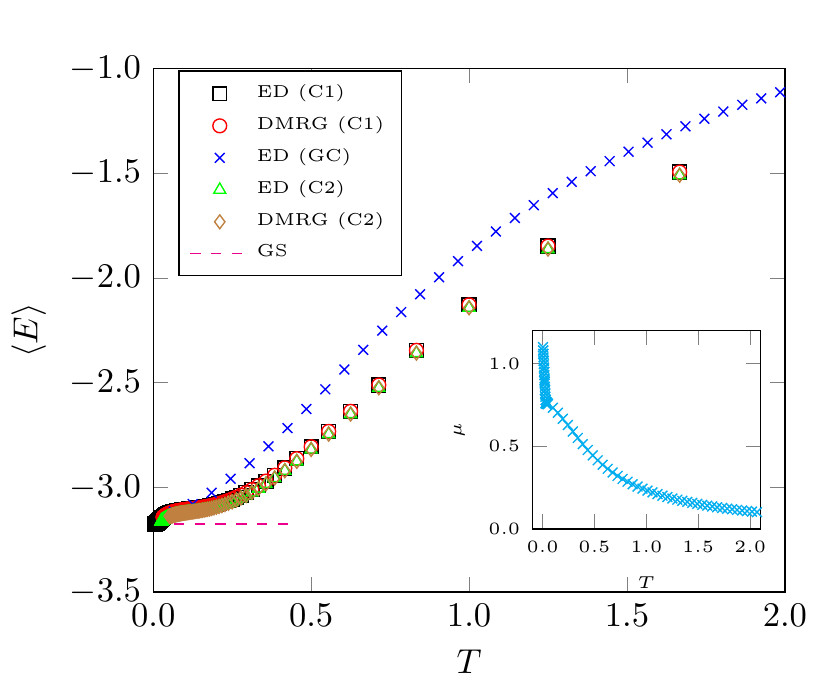}
\caption{(Color online) Intensive energies for a t-J chain of $L=6$ sites, $N=4$ 
electrons, with $J=0.1$,
for the canonical and grand canonical approaches ($\text{C1}$ and C2, see main text), 
and comparing DMRG against ED, as indicated.
\emph{Inset:} Chemical potential as a function of the temperature, calculated in 
the grand canonical approach
so that the average density is always $N=4$ electrons.} \label{fig3}
\end{figure}

Because the $t-J$ chain has charge and spin degrees of freedom, there is an
alternative and more efficient way of performing the canonical purification
scheme: Consider the maximally entangled infinite temperature
state 
\begin{equation}\label{MaxEntangledCtJ1}
|\psi(\beta=0)\rangle_{\text{C2}} = P_{\binom{N_{\text{ph.}}=N}{S^{z}_{\text{ph.}}=0}}
\Bigg[\bigotimes_{i=0}^{L-1} \big(|0,0\rangle+\sum_{\sigma=\uparrow,\downarrow}|\sigma \bar\sigma\rangle\big)\Bigg],
\end{equation}
where the projector operator is such that the total number of electrons
\emph{and} the total $z$-component of the physical (ancilla) chain are
conserved,
\begin{align}
N_{\text{ph.}}|\psi(\beta=0)\rangle_{\text{C2}}=&N|\psi(\beta=0)\rangle_{\text{C2}}\\
N_{\text{an.}}|\psi(\beta=0)\rangle_{\text{C2}}=&N|\psi(\beta=0)\rangle_{\text{C2}}\\
S^{z}_{\text{ph.}}|\psi(\beta=0)\rangle_{\text{C2}}=&S^{z}_{\text{an.}}
|\psi(\beta=0)\rangle_{\text{C2}}=\emptyset.
\end{align} 

With a procedure similar to the one just outlined, it is possible to show that an
``entangler'' Hamiltonian yielding Eq.~(\ref{MaxEntangledCtJ1}) as ground state is 
\begin{equation}\label{EntanglertJC2}
H^{t-J}_{\text{C2}} = -\sum_{\substack{i\neq j\\\sigma=\uparrow,\downarrow}} c^{\dag}_{i,\sigma}
 c^{\dag}_{a(i),\bar\sigma}c_{j,\bar\sigma}c_{a(j),\sigma} + {\rm h.~c.}.
\end{equation}
The ground state energy of the above Hamiltonian is equal to $-N(L-N)$, coinciding with
the result for the $\text{C1}$ Hamiltonian. 
Notice that, for an $L$ site chain, the subspace containing 
the maximally entangled state in the canonical approach $\text{C1}$ has dimension 
$D_{\text{C1}}=\sum_{N_{\uparrow}=0}^{N} C^{L}_{N_{\uparrow}}C^{L-N_{\uparrow}}_{N-N_{\uparrow}}$, 
which is evidently larger than the analogous subspace dimension $D_{\text{C2}}=C^{L}_{N/2}C^{L-N/2}_{N/2}$ 
of the case C2, where one has also $N_{\uparrow}=N_{\downarrow}=N/2$. 
If one imposes the conservation of $N_{\text{ph.}}$ \emph{and} $S^{z}_{\text{ph.}}$ 
during the finite temperature evolution in the enlarged Hilbert
space, this remarkably reduces the computational effort and increases the 
efficiency of the time dependent DMRG simulations. In our typical runs, when this purification 
scheme is adopted, the computational time needed for obtaining the thermodynamic properties 
is reduced by a factor of two with respect to the canonical purification C1.

Fig.~\ref{fig3} shows the average intensive energy as a function of the temperature 
for a t-J chain with $L=6$ sites, $N=4$ electrons, and $J=0.1$, 
($t=1$ is assumed as unit of energy) comparing canonical and grand canonical 
approaches. 
One observes a perfect numerical agreement between DMRG and ED for both canonical 
approaches C1 and C2. For the small system size
considered, the results for the average energies differ (by less than $1\%$) at temperature larger than zero,
 while converging to the ground state value in the limit of large $\beta$. 
We have verified that, even for a system size of $L=12$ sites, the numerical results obtained 
in the C1 and C2 approaches are coincident within the numerical precision of our runs. 

The results obtained in the grand canonical approach were calculated with ED by
performing a procedure similar to a standard Maxwell construction. We first
calculated the total electronic density as a function of the temperature, for
different chemical potential values $\mu$. By solving an equation similar to
Eq.~(\ref{Equatmu}), $\langle N_{tot}\rangle=4$, we have then extracted the
chemical potential curve at constant density as function of the temperature, which
is reported in the inset of fig.~\ref{fig3}. Finally, in the main plot of fig.~\ref{fig3},
the blue crosses represent the intensive energy as a function of the temperature. 
Even for small system sizes, DMRG calculations 
in the grand canonical approach are already computationally expensive. We can then conclude 
that the canonical purification C2 should be the computationally preferred 
choice for the study of the thermodynamic properties of the $t-J$ model. (Yet the disadvantages
of the canonical are discussed in section~\ref{sec:entanglement}.)

\subsection{Hubbard chains}\label{sec:hubbard}
\label{sub:Hubbard}

\subsubsection{Grand canonical purification scheme}

In this section, we apply the purification schemes introduced in the previous section 
to a one dimensional Hubbard model described by the standard Hamiltonian
\begin{equation}\label{Hubbard}
H_{Hub}=-t\sum_{i=0,\sigma}^{L-2} \big( c_{i,\sigma}^{\dag}c_{i+1,\sigma} + h.c.\big) 
+U \sum_{i=0}^{L-1} N_{i,\uparrow}N_{i,\downarrow}.
\end{equation}
The model is characterized by a local four-dimensional Hilbert space with states empty
 $|0\rangle$, single occupied $|\mathord{\uparrow}\rangle$, 
$|\mathord{\downarrow}\rangle$ and double occupied 
$|\mathord{\uparrow\downarrow}\rangle$.

We start by 
considering the grand-canonical purification scheme. Following the same procedure 
used for the $t-J$ case, one can straightforwardly write (up to a constant)
\begin{equation}\label{MaxEntangledHub}
|\psi\rangle = \bigotimes_{i=0}^{L-1} \Big[|0,0\rangle+
|\uparrow\downarrow,\uparrow\downarrow\rangle+
\sum_{\sigma=\uparrow,\downarrow}|\sigma \bar\sigma\rangle\Big]
\end{equation}
for the infinite 
temperature state in the grand canonical approach.
The only difference with respect to the state Eq.~(\ref{MaxEntangledtJ}) is that 
 the double occupied state on a physical site is ``mapped'' to the 
same state on its ancilla. By diagonalizing a $16\times16$ matrix, 
where 16 is the dimension of a composite physical-plus-ancilla site, the state above 
can be generated by calculating the ground state of the entangler Hamiltonian 
\begin{equation}\label{EntanglerHubGC}
H^{\text{Hubbard}}_{\text{GC}} = \sum_{i=0,\sigma=\uparrow,\downarrow}^{L-1}\big( 
c_{i,\sigma}c_{a(i),\bar\sigma}P^{\sigma}_{i} + {\rm h.~c.}\big),
\end{equation}
where $P^{\sigma}_{i}=|1-N_{i,\bar\sigma}-N_{a(i),\sigma}|$.

State Eq.~(\ref{MaxEntangledHub}) conserves  the total spin 
 in the enlarged chain, but does not do so 
for the physical chain. A chemical potential 
 $\mu$ must therefore be added
 in the Hamiltonian used during 
the temperature evolution, and \emph{mutatis mutandis}, the same procedure outlined in the 
previous section for the $t-J$ case must be followed.
The average total spin of the physical chain is zero at any 
temperature, that is, $\langle \psi(\beta,\mu)|S^{z}_{tot}|\psi(\beta,\mu)\rangle=0$, 
with $\mu$ being a solution of an equation similar to Eq.~(\ref{Equatmu}).
Because of the computational cost of the procedure, the results in
the grand canonical scheme are presented using only exact diagonalization on a small system size.

\subsubsection{Canonical purification scheme}

We now address the ``canonical'' purification schemes for the Hubbard model.
We consider a chain with $L$ sites, filling $N/L$, and a 
total even number of electrons $N$, such that $N_{\uparrow}=N_{\downarrow}=N/2$.

Up to a constant, the ``canonical'' maximally entangled state 
of type C1 at infinite temperature is
\begin{equation}\label{MaxEntangledC1Hub}
|\psi\rangle_{\text{C1}} = P_{(N_{\text{ph.}}=N)}\Bigg[\bigotimes_{i=0}^{L-1} \big(|0,0\rangle+|\uparrow\downarrow,
\uparrow\downarrow\rangle+\sum_{\sigma=\uparrow,\downarrow}|\sigma \bar\sigma\rangle\big)\Bigg],
\end{equation}
where $P_{(N_{\text{ph.}}=N)}$ is a projector operator such that the total 
number of electrons of the physical (ancilla) chain is conserved, 
\begin{equation}
N_{\text{ph.}}|\psi_{\beta=0}\rangle_{\text{C1}}=N_{\text{an.}}|\psi_{\beta=0}\rangle_{\text{C1}}=
N|\psi_{\beta=0}\rangle_{\text{C1}}.
\end{equation}
Even though this state conserves charge in the physical chain, 
it does \emph{not} conserve the $z$-component of the spin of the physical 
chain. Yet, ${}_{\text{C1}}\langle \psi(\beta)|S^{z}_{\text{ph.}}|\psi(\beta)\rangle_{\text{C1}}=0$.

As shown below, the canonical purification just discussed
is not the most efficient for a DMRG implementation. For this reason, in the rest of 
the section we focus on the canonical scheme of type C2: the 
maximally entangled infinite temperature state 
\begin{equation}\label{MaxEntangledCHub2}
|\psi\rangle_{\text{C2}} = P_{\binom{N_{\text{ph.}}=N}{S^{z}_{\text{ph.}}=0}}\Bigg[\bigotimes_{i=0}^{L-1} 
\big(|0,0\rangle+|\uparrow\downarrow,\uparrow\downarrow\rangle+
\sum_{\sigma=\uparrow,\downarrow}|\sigma \bar\sigma\rangle\big)\Bigg],
\end{equation}
where the projector operator is such that the total number of electrons
\emph{and} the total $z$-component of the physical (ancilla) chain are
conserved, 
\begin{gather}
\begin{aligned}
N_{\text{ph.}}|\psi_{\beta=0}\rangle_{\text{C2}}=&N_{\text{an.}}|\psi_{\beta=0}
\rangle_{\text{C2}}=N|\psi_{\beta=0}\rangle_{\text{C2}},\\
S^{z}_{\text{ph.}}|\psi_{\beta=0}\rangle_{\text{C2}}=&S^{z}_{\text{an.}}
|\psi_{\beta=0}\rangle_{\text{C2}}=\emptyset.
\end{aligned}
\end{gather} 
With a procedure similar to that outlined in section \ref{subsubsec:tj}, it is possible to show 
(see appendix~\ref{App:AppendixC}) that an
``entangler'' Hamiltonian giving Eq.~(\ref{MaxEntangledCHub2}) as ground state is 
\begin{equation}\label{hami_canonical_HubC2}
H^{\text{Hubbard}}_{\text{C2}} = -\sum_{\substack{i\neq j\\\sigma=\uparrow,\downarrow}}
 \Lambda^\dagger_{\sigma,i}\Lambda_{\sigma,j}+ {\rm h.~c.},
\end{equation}
where $\Lambda_{i,\sigma}=c_{i,\bar\sigma}c_{i,\sigma}P^{\sigma}_{i}$, where $P^{\sigma}_{i}$
is given below Eq.~(\ref{EntanglerHubGC}).

\begin{figure}
\centering
\includegraphics[scale=0.9]{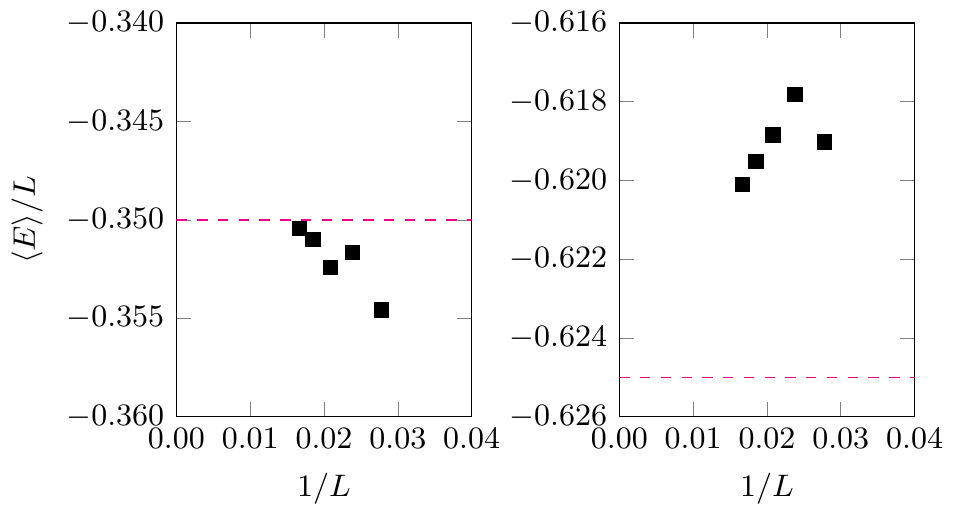}
\caption{(Color online) Intensive energies for the Hubbard model, $U=10$, and 
density $2/3$, as a function of the system size. Squares indicate data obtained with 
DMRG using a maximum of $m=1000$ and a truncation error of $10^{-6}$. 
The dashed (magenta) lines indicate the energy calculated with thermodynamic Bethe ansatz. 
Left panel shows $T=1.25$, right panel $T=0.25$.} \label{fig4a}
\end{figure}
\begin{figure}
\centering
\includegraphics{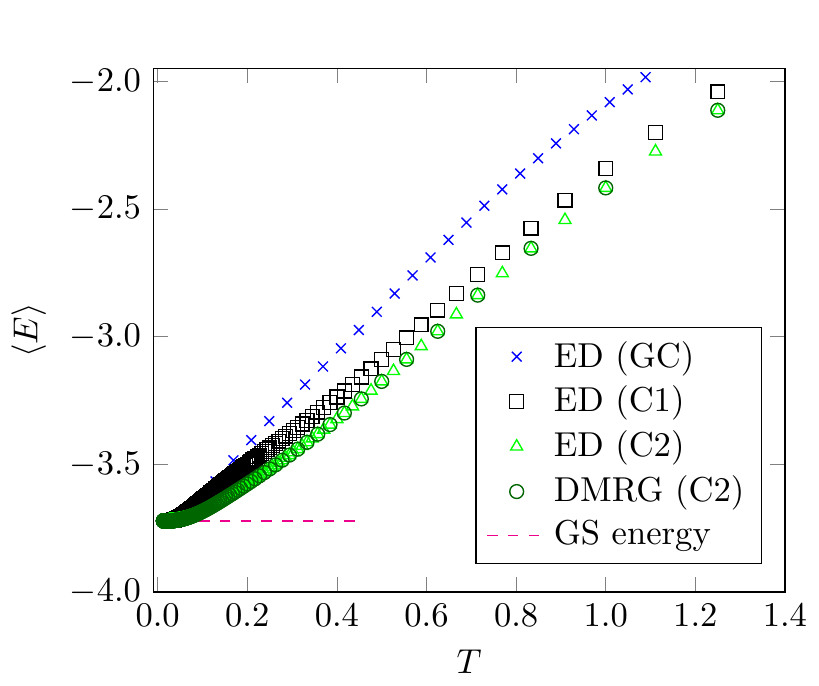}
\caption{(Color online) Intensive energies for the Hubbard model, $U=10$, and 
density $2/3$, as a function of the temperature for a $L=6$ chain. 
Shown are results within the grand canonical, calculated using ED, blue crosses; 
canonical $\text{C1}$, calculated using ED,  black squares; canonical C2 calculated using ED, 
green triangles; and canonical C2 obtained with DMRG, dark green empty circles. 
The ground state solution is also indicated with a magenta dashed line.} \label{fig4}
\end{figure}
For an $L$ site chain, the subspace containing 
the maximally entangled state in the canonical approach $\text{C1}$ has dimension 
$D_{\text{C1}}=\sum_{N_{\uparrow}=0}^{N} C^{L}_{N_{\uparrow}}C^{L}_{N_{\uparrow}}$, 
which is larger than the analogous subspace dimension $D_{\text{C2}}=C^{L}_{N/2}C^{L}_{N/2}$ 
of the case C2; both cases have $N_{\uparrow}=N_{\downarrow}=N/2$.
(Notice the difference with 
the corresponding expressions provided for the $t-J$ model in the previous section.) 
If one imposes the conservation of $N_{\text{ph.}}$ \emph{and} $S^{z}_{\text{ph.}}$ 
during the finite temperature evolution in the enlarged Hilbert
space, one can thus reduce (in our typical runs by a factor of two or more) 
the computational time needed for obtaining the 
thermodynamic properties. We therefore recommend that the canonical approach C2 
be the preferred purification 
scheme when finite temperature static and even dynamic properties are calculated with DMRG. 

Figure~\ref{fig4a} shows the intensive energies as a function of the system size for the 
canonical approach C2 for $U=10$ ($t=1$ is assumed as unit of energy) and filling $2/3$. 
The left panel shows data at $T=1.25$, while the right panel at $T=0.25$. Using $m=1000$ 
and imposing a truncation error not bigger than $10^{-6}$ in the DMRG runs, 
the results get closer to the thermodynamic Bethe ansatz results as the system 
size increases. In this paper, we have considered chains of length up to $L=60$.

Fig.~\ref{fig4} shows a comparison between the intensive energies obtained in the canonical 
and grand canonical approaches for a Hubbard chain with $L=6$ sites, $N=4$ electrons, and $U=10$
as a function of temperature. Because 
the system size is small, the numerical results differ in the three approaches, even though they converge 
to the exact 
ground state solution in the limit of $\beta\rightarrow\infty$. Remarkably, even for $L=6$, the results of the
 C1 and C2 approaches differ by less than $5\%$ on the entire $\beta$ interval studied. 
 
For the canonical purification scheme C2, Fig.~\ref{fig4} also compares 
DMRG results against those obtained with ED. 
The agreement is close to the numerical precision. Finally, the computationally expensive results 
with the grand canonical approach are also shown. These results are calculated by 
performing a procedure similar to a standard Maxwell construction.

\begin{figure}
\centering
\includegraphics{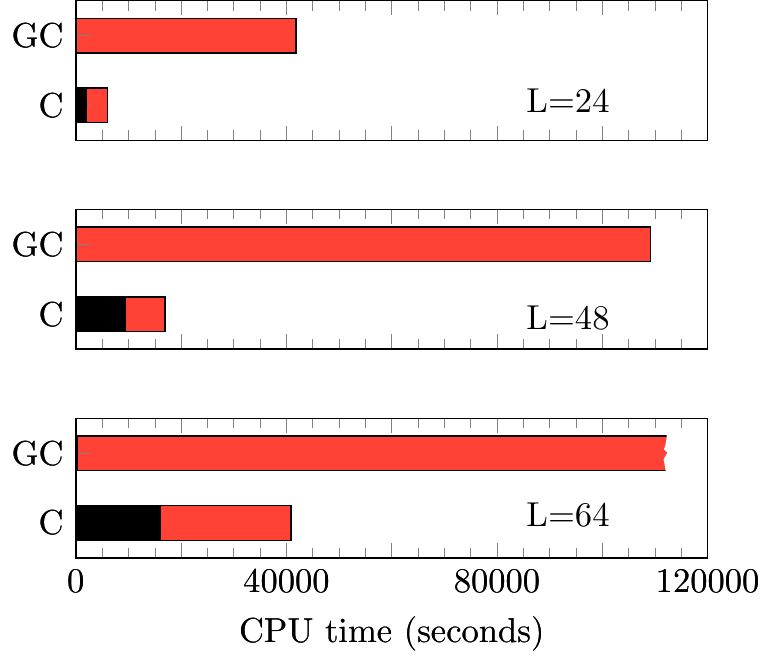}
\caption{(Color online) CPU time of typical runs in the grand canonical 
and grand canonical approaches for the Heisenberg model at $\beta=0.8$, $J=-1$, 
and different system sizes. The black bars indicate the CPU time needed to solve the entangler Hamiltonian
 (for the GC see Eq.~(\ref{HeisenbergGC}), for C see Eq.(~\ref{HeisenbergC})), 
 while the red bars indicate the same quantity for the temperature evolution.
 In the top panel ($L=24$) and middle panel ($L=48$) $m=400$ DMRG states are used for the temperature evolution, 
 while in the bottom panel ($L=64$) $m=600$. The CPU time for GC with L=64
has been truncated (the actual value is 595,000), because it is larger
than the maximum CPU time shown in this scale.} \label{fig5}
\end{figure}

\section{Long range interactions in canonical entanglers} \label{sec:entanglement}

In this section, we discuss the properties of the entangler Hamiltonians in the canonical purification scheme. As already mentioned in the previous sections, the entangler Hamiltonians 
have long range interactions, with connections between sites at all possible distances.   
The resulting entanglement growth makes it difficult to compute the ground state of such Hamiltonians with the DMRG. 
Therefore, one would think that the grand canonical
approach, where the \emph{local} entangler Hamiltonians are used, would be more efficient 
for the calculation of the thermodynamic quantities. 
But the canonical purification scheme is computationally much more efficient.

\begin{figure}
\centering
\includegraphics[]{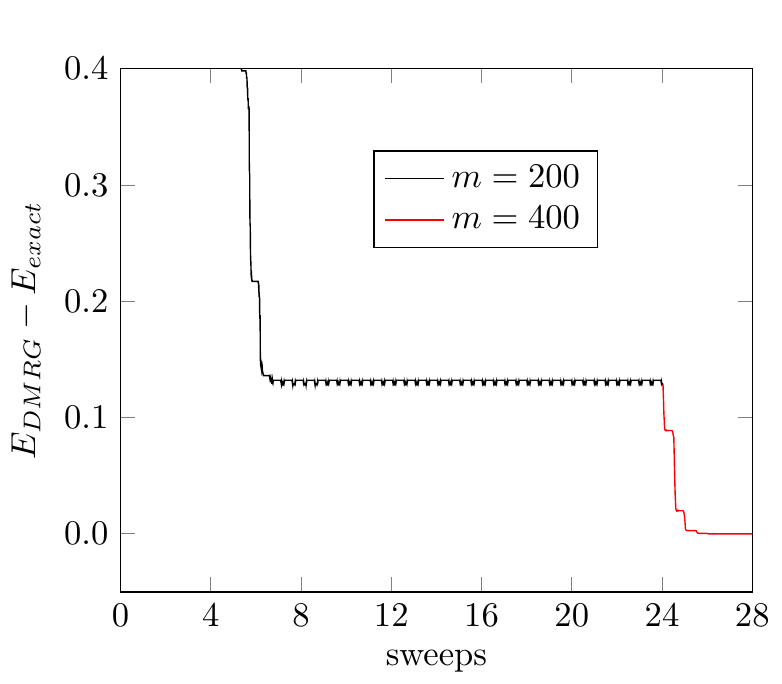}
\caption{(Color online) Difference between the ground state energy of the entangler
 Hamiltonian Eq.~(\ref{hami_canonical_HubC2}) for the Hubbard model at density $2/3$
and the exact ground state energy 
as a function of the number of DMRG sweeps. 
A chain with $L=60$ sites is considered. Increasing the number of DMRG states kept from $m=200$ to $m=400$
 allows the results to converge to the exact ground state energy.} \label{fig6}
\end{figure}

Fig.~\ref{fig5} explains why. It shows the CPU times 
of the two purification schemes for the Heisenberg model. 
CPU times are divided in two parts: the calculation of the ground state of the entangler Hamiltonian (black bars), 
and temperature evolution up to the desired inverse temperature $\beta=0.8$ (gray (red) bars), 
starting from the ground state of the entangler Hamiltonian.
The computational time
 needed to calculate the ground state of the entangler in the grand canonical approach is too small to be visible
 in the figure.
 The CPU time 
 needed for the canonical entangler Hamiltonian is larger and visible as black bars. 
The \emph{apparent} disadvantage of the canonical purification versus the grand canonical scheme seems to
increase if large system sizes are considered. For $L=64$ sites and $m=200$, the CPU time
 needed in the canonical scheme for the ground state calculation is about three orders of magnitude longer than
that needed in the grand canonical approach. 

As indicated
by the gray (red) bars in Fig.~\ref{fig5},
the CPU time of the temperature evolution of the canonical 
purification is remarkably smaller than that of the grand canonical, because
the temperature
 evolution takes place in a Hilbert space of reduced dimensions.
(This results from the conservation of $S^z_{ph}$, as stated in sec.~\ref{sec:heisenberg}.)
Given that the total CPU time is equal to the sum of the CPU time needed for the entangler plus the
CPU time needed for the temperature evolution, we can conclude that the canonical purification scheme is 
faster than the grand canonical
  approach---by one order of magnitude or more.

Yet it is important to make sure that ground states of canonical entanglers are converged, given the aforementioned
difficulties due to the long range interactions. 
To that aim, Fig.~\ref{fig6} is an example showing that the canonical entanglers used 
throughout this paper are converged.
Fig.~\ref{fig6} shows the ground state energy of the canonical entangler C2 for the Hubbard model as a function of
  the number of DMRG sweeps, for a chain of $L=60$ sites with $U=10$ and density $2/3$.
Several sweeps and two values of $m$ were used to converge to the exact ground state. 

We end this section by providing
the ground state energies of the entangler Hamiltonians.
Let $L$ be the number of physical sites of the system.
As shown in Appendix~\ref{App:AppendixB}, 
the ground state energy of the Heisenberg entangler Eq.~\ref{HeisenbergC} is -$L^2/4$.
Let $N$ be the number of electrons in the physical system.
Section II.B.2 shows that the ground state energy of the t-J entangler Eq.~(\ref{EntanglertJC1}) is $-N(L-N)$, which is equal to that of Eq.~(\ref{EntanglertJC2}).
Appendix~\ref{App:AppendixC} proves that the ground state energy of the Hubbard entangler Eq.~(\ref{hami_canonical_HubC2}) is $-N(L-N/2)$.

\section{Summary and Conclusions}\label{sec:conclusions}
\setlength{\tabcolsep}{4pt}
\begin{table}[b]
\begin{tabular}{lllll} \toprule
$N_{\rm e}^{\rm ph.}$ &  $S_{z}^{\rm ph.}$ & Heisenberg & $t$-$J$ Model & Hubbard\\ \midrule
No & No &  & GC, (\ref{MaxEntangledtJ}), (\ref{EntanglertJGC}) & GC, (\ref{MaxEntangledHub}),
 (\ref{EntanglerHubGC})\\
Yes & No & GC, (\ref{MaxEntangled}), (\ref{HeisenbergGC})   & C1, (\ref{MaxEntangledtJ}), (\ref{EntanglertJC1})
 & C1, (\ref{MaxEntangledC1Hub})\\
Yes & Yes & C, (\ref{MaxEntangledC}), (\ref{HeisenbergC})      & C2, (\ref{MaxEntangledCtJ1}),
 (\ref{EntanglertJC2}) & C2, (\ref{MaxEntangledCHub2}), (\ref{hami_canonical_HubC2})\\
\end{tabular}
\caption{\label{tbl:grandcanonical}
For each row, the conservation of the number of particles $N_{\rm e}^{\rm ph}$ or 
number of particles $N_{\rm e}^{\rm ph}$ and the $z$ component of the spin $S_{z}^{\rm ph}$ 
of the physical system is indicated by 
``Yes'' or ``No.'' For each model, the labels grand canonical (GC), 
canonical type 1 (C1)
or type 2 (C2) are defined. The numbers refer to equations in this paper: 
the first number
is the equation defining the infinite temperature state, the second number 
(if present) is the equation defining the corresponding \emph{entangler} Hamiltonian. 
For the Heisenberg model the $N_{\rm e}^{\rm ph}$ column should be ignored.}
\end{table}

In summary, we have improved the efficiency of the ancilla method 
for finite temperature DMRG by employing the \emph{inherent} symmetries
of the physical system in consideration. We have designed different entangler Hamiltonians 
to obtain infinite temperature states to use as
starting states for the temperature evolution. Table~\ref{tbl:grandcanonical} reports the
purification schemes adopted for each model considered, 
and serves as an index to the equations obtained. 
The supplemental
material can be found at \url{https://web.ornl.gov/~gz1/papers/55/}. 
It provides a pointer to the full open source code, input decks 
and additional computational details.

As a main result, we have derived entangler Hamiltonians for the canonical purification scheme of spin chains,
the $t-J$ model and the Hubbard model.
The present work
codifies an efficient ancilla method for spin chains and fermionic systems, because
(i) canonical purification is the most
efficient for obtaining the thermodynamic properties of the physical system in consideration, and
(ii) entangler  Hamiltonians are needed to obtain workable representations of infinite temperature states.
Moreover, the efficiency brought about by the use of symmetries does not compromise accuracy: 
the ancilla method is as accurate as originally proposed.

Due to their long range interactions, canonical entanglers appear computationally costlier than grand canonical ones.
But grand canonical entanglers have to be simulated on larger Hilbert spaces. 
\emph{Overall, canonical entanglers turn out to be computationally much more efficient,} as was shown in section~\ref{sec:entanglement}. 

In the Hubbard model away from half-filling, we have verified that the efficiency 
gain of the canonical scheme overcomes the extra cost due to the need for
more states in the temperature evolution for small and medium-size systems (up to $L=60$). 
For very large system sizes and at half filling, the grand 
canonical approach remains an alternative method for the calculation 
of the thermodynamic properties~\cite{Karrasch2014}. 
Yet the grand canonical scheme away from half-filling
requires computational runs to adjust the chemical potential 
at every temperature!

The purification schemes 
proposed are general and can be applied to both more complex one dimensional models and
to geometries beyond chains. 
They apply to more general models because the Hamiltonian of the system comes into play only in the temperature 
 evolution.
They apply to more general geometries because 
 the ancillas can be thought of as an extra orbital for each
physical one.

Concerning the Hubbard and t-J models, we believe that the canonical approach C2 should be 
the preferred purification scheme when finite temperature static and even dynamic properties 
are calculated with DMRG. For instance, in ref.~\onlinecite{Karrasch2013}, the entanglement
growth characterizing the purification scheme has been reduced by time evolving 
the auxiliary degrees of freedom backward in time, when a combination of finite 
temperature and time dependent DMRG is needed for the calculation of spectral properties.
 We think that a natural step would be to combine the suggested improvement with the canonical 
purification proposed in this paper.

\begin{acknowledgments}
We would like to thank T. Barthel,  E. Dagotto, A. Feiguin, C. Karrasch,  S. Manmana,
U. Schollw\"oeck, and M. Stoudenmire for
helpful discussions and suggestions.
This work was conducted at the Center for Nanophase Materials Sciences, sponsored 
by the Scientific User Facilities Division, Basic Energy Sciences, Department 
of Energy (DOE) (USA), under contract with UT-Battelle. We acknowledge support by the early 
career research program Department of Scientific User Facilities, Office of Science, Basic Energy Sciences, U.S. DOE.
\end{acknowledgments}

\appendix
\section{Canonical vs. grand canonical ensemble} \label{App:AppendixA}
The purpose of this appendix is three-fold. First, to show that, for finite systems at finite temperature, 
the canonical and grand canonical
ensembles do not, in general, yield the same averages. Second, to show that, at $T=0$, the   canonical and
 grand canonical
ensembles agree for any Hamiltonian. And third, that they also agree \emph{in the thermodynamic limit} for any
 Hamiltonian.
Even though these results are known, we prove them below for completeness.

\begin{myprop}
At finite temperature, there exists at least one finite dimensional Hamiltonian 
with canonical average energy different than its grand canonical average energy.
\end{myprop}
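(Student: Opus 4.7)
The plan is to settle the statement by exhibiting a single explicit finite-dimensional counterexample and computing both ensemble averages in closed form. Since only existence is claimed, no generality argument is needed, and the simplest available witness should suffice.

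I would take the simplest non-interacting fermionic system capable of distinguishing the two ensembles: two spinless orbitals with single-particle energies $\epsilon_1<\epsilon_2$, with Hamiltonian $H=\epsilon_1 n_1+\epsilon_2 n_2$ on the four-dimensional Fock space. For the canonical ensemble I fix $N=1$, giving a two-dimensional sector with partition function $Z_C=e^{-\beta\epsilon_1}+e^{-\beta\epsilon_2}$ and the transparent closed-form expression $\langle E\rangle_C=(\epsilon_1 e^{-\beta\epsilon_1}+\epsilon_2 e^{-\beta\epsilon_2})/Z_C$.

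For the grand canonical side I need to choose the chemical potential $\mu$ so that $\langle N\rangle_{GC}=1$. Rather than solving a transcendental equation, I would invoke the particle-hole symmetry that exchanges empty and doubly occupied states and sends $E\to\epsilon_1+\epsilon_2-E$, $N\to 2-N$; this symmetry forces the choice $\mu=(\epsilon_1+\epsilon_2)/2$. With this $\mu$, the grand canonical partition function $Z_{GC}=1+e^{-\beta(\epsilon_1-\mu)}+e^{-\beta(\epsilon_2-\mu)}+e^{-\beta(\epsilon_1+\epsilon_2-2\mu)}$ simplifies to $2+2\cosh(\beta\Delta/2)$ with $\Delta=\epsilon_2-\epsilon_1$, and $\langle E\rangle_{GC}$ reduces to an equally compact expression.

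The final step is a direct subtraction. Setting $\epsilon_1=0$, $\epsilon_2=\Delta$ and evaluating at, say, $\beta\Delta=2$ already shows $\langle E\rangle_C\neq\langle E\rangle_{GC}$; more conceptually, the empty and doubly occupied configurations contribute to $Z_{GC}$ but not to $Z_C$, which redistributes weight across the single-particle levels and shifts the average energy for every finite $\beta>0$. There is no real obstacle here, which is why the authors state the result as an Observation: the entire argument is a brief bookkeeping of four Fock-space configurations, and the only mildly delicate point is avoiding an implicit equation for $\mu$ by exploiting particle-hole symmetry.
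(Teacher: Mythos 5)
Your construction is correct and proves the statement, but it uses a genuinely different witness than the paper. The paper's counterexample is an \emph{interacting} two-level system, $H=\epsilon(\hat n_1+\hat n_2)+V\hat n_1\hat n_2$ with degenerate single-particle levels: there the canonical $N=1$ average is trivially $\epsilon$ (both states in the sector have the same energy, so no $\beta$-dependence at all), and the entire discrepancy $\frac{V/2}{1+e^{\beta V/2}}$ is generated by the doubly occupied state that only the grand canonical ensemble can reach. Your witness is the complementary case: a \emph{non-interacting} system with non-degenerate levels, where the discrepancy comes purely from number fluctuations redistributing weight between the two single-particle states; explicitly, $\langle E\rangle_C=\Delta/(1+e^{\beta\Delta})$ versus $\langle E\rangle_{GC}=\Delta/(1+e^{\beta\Delta/2})$, which differ for all $0<\beta<\infty$. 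Your version makes the slightly stronger point that interactions are not needed for the ensembles to disagree at finite size; the paper's version buys a canonical average that requires no computation. Both are four-state bookkeeping exercises with $\mu$ fixed by a half-filling symmetry, so the overall strategy is the same.

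One small caution: the transformation you invoke to fix $\mu$ --- the map sending $E\to\epsilon_1+\epsilon_2-E$, $N\to 2-N$ --- is \emph{not} by itself a symmetry of the grand canonical weight at $\mu=(\epsilon_1+\epsilon_2)/2$, since it sends $E-\mu N\to-(E-\mu N)$ and therefore inverts rather than preserves the Boltzmann factors of the two $N=1$ states. The conclusion is nevertheless right, and is most cleanly justified either by combining particle--hole conjugation with the orbital swap $1\leftrightarrow 2$ (which does leave $H-\mu N$ invariant while sending $N\to 2-N$), or simply by noting that at this $\mu$ the empty and doubly occupied states carry equal weight while every remaining state has $N=1$, so $\langle N\rangle_{GC}=1$ by inspection. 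Either repair is one line, so this is a presentational flaw rather than a gap.
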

\begin{proof}
We construct such a Hamiltonian by considering 2 levels, 
and a one-site Hilbert space including the states,
empty, level 1 occupied only, level 2 occupied only, and both levels occupied.
Let
\begin{equation}
H=\epsilon(\hat n_{1}+\hat n_2)+V\hat n_{1}\hat n_{2},
\end{equation}
where $\hat{n}_l$ acting on the basis states multiplies it by 
the level $l$ occupation of that state.
Let $\epsilon>0$ and $V>0$.
The average energy in the canonical ensemble 
with exactly $N=1$ particle in the system is $\langle E\rangle_{C}=\epsilon$.
The average energy in the grand-canonical ensemble 
 with density $\langle N\rangle=1$ is given by
$\langle E\rangle_{GC}=\epsilon + \frac{V/2}{1+{\rm e}^{\beta V/2}},$
which is greater than $\epsilon$ at finite temperature.
\end{proof}

\begin{myprop}
For any finite dimensional Hamiltonian with convex energies, at zero temperature, 
the average energy in the canonical ensemble is the same as the one in the grand canonical
ensemble.
\end{myprop}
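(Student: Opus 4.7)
My plan is to identify the zero-temperature limits of both ensembles explicitly in terms of the function $E_0(N)$, defined as the ground-state energy of $H$ restricted to the sector with exactly $N$ particles, and then exploit the hypothesized convexity of $N\mapsto E_0(N)$ to show the two limits coincide.

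First, I would argue that for the canonical ensemble at fixed particle number $N$, the $\beta\to\infty$ limit collapses the Gibbs state onto the $N$-particle ground-state manifold, so the average energy equals $E_0(N)$. This step is essentially immediate from the standard Boltzmann factors restricted to the $N$-particle sector.

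Next, for the grand canonical ensemble at chemical potential $\mu$, the $\beta\to\infty$ limit collapses the Gibbs state of $H-\mu\hat N$ onto the joint minimizer(s) of $E_0(N')-\mu N'$ over all $N'$ in the allowed range. The key step is then to show that, given any target $N$, one can choose $\mu$ so that (a) the minimum of $E_0(N')-\mu N'$ is attained at $N'=N$ (possibly with ties), and (b) the average particle number condition $\langle \hat N\rangle=N$ holds. Convexity of $E_0$ is precisely what guarantees this: the left and right discrete derivatives $\mu^{-}_N=E_0(N)-E_0(N-1)$ and $\mu^{+}_N=E_0(N+1)-E_0(N)$ satisfy $\mu^{-}_N\le\mu^{+}_N$, so any $\mu\in[\mu^{-}_N,\mu^{+}_N]$ places $N$ among the minimizers of $E_0(N')-\mu N'$. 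If the interval has nonempty interior (strict convexity at $N$), any interior $\mu$ makes $N$ the \emph{unique} minimizer, and the grand canonical average energy is exactly $E_0(N)$ with $\langle\hat N\rangle=N$, matching the canonical result.

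The only genuine subtlety is the boundary case in which the convexity is not strict at $N$, so that $E_0(N')-\mu N'$ is simultaneously minimized at $N-1$, $N$, and/or $N+1$ for the chemical potential that enforces $\langle\hat N\rangle=N$. I would handle this by noting that affine behavior of $E_0$ across these sectors forces the linear combination $E_0(N')=E_0(N)+\mu(N'-N)$ for every minimizing $N'$, so any probabilistic mixture over these sectors that yields $\langle\hat N\rangle=N$ also yields $\langle E\rangle=E_0(N)$. This is the main technical point; once it is dispatched, the conclusion that the canonical and grand canonical zero-temperature averages agree follows in one line. I would also remark in passing that ground-state degeneracy within a single sector is irrelevant, since every such state contributes the same energy $E_0(N)$.
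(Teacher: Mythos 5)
Your proof follows essentially the same route as the paper's: identify the $T=0$ canonical average as the sector ground-state energy $E_0(N)$, and use convexity of $N\mapsto E_0(N)$ to choose a chemical potential for which the grand canonical minimization of $E_0(N')-\mu N'$ is attained at the target sector, so both averages equal $E_0(N)$. Your explicit handling of the non-strictly-convex (tied) case, where the affine relation $E_0(N')=E_0(N)+\mu(N'-N)$ forces any mixture with $\langle\hat N\rangle=N$ to have energy $E_0(N)$, is in fact more careful than the paper's terse assertion that the minimizing $\mu$ is unique.
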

\begin{proof}
Let us consider a system with $M$ sites and target density $N_T$. The number $M$
includes sites, orbitals and spin, such that the maximum number of electrons that
the system can hold is $M$.
Let the eigenvalues of the Hamiltonian be $E_{n'}^N$, where the index $n'$ runs
only over states of Fock sector $\mathcal{F}_N$ of $N$ particles.
Let $E_{\rm min}^N=
\min_{n'\in\mathcal{F}_N}E_{n'}^N$. 
At $T=0$, the average energy in the canonical ensemble with density $N_T$ is $E_{\rm min}^{N_T}$.  
In the grand canonical (GC) ensemble we impose $\langle N\rangle_{{\rm GC},\,T=0}=N_T$,
and thus the chemical potential $\mu$ 
is obtained by imposing that the $N$ that minimizes
 $F_N\equiv E_{\rm min}^N-\mu N$ be $N_T$. Because the energies are convex, such a $\mu$ is unique. 
 For this $\mu$, we obtain $\langle E\rangle_{{\rm GC},\,T=0}=E^{N_T}_{\rm min}$,
which coincides with the canonical result.
\end{proof}

\begin{myprop}
For any Hamiltonian with convex energies and an extensive canonical partition function, at finite temperature,
and in the thermodynamic limit, the average energy in the canonical ensemble is the same as the one in the
grand canonical ensemble.
\end{myprop}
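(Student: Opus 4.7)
The plan is to exploit the standard Laplace/saddle-point argument that tethers the grand canonical partition function to the canonical one, and then to show that the extra sum over particle number that distinguishes the two ensembles becomes sharply concentrated in the thermodynamic limit.

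First I would write the exact identity
\begin{equation}
Z_{\rm GC}(\beta,\mu)=\sum_{N=0}^{M} e^{\beta\mu N}\,Z_{C}(\beta,N),
\end{equation}
and use the extensivity hypothesis to set $-\beta^{-1}\ln Z_{C}(\beta,N)=L\,f_{C}(\beta,n)+o(L)$ with $n=N/L$, where $f_{C}$ is a well-defined canonical free-energy density. Substituting gives $Z_{\rm GC}(\beta,\mu)=\sum_{N}\exp\{-\beta L\,[f_{C}(\beta,n)-\mu n]+o(L)\}$. By the convexity assumption, the function $g_{\mu}(n)\equiv f_{C}(\beta,n)-\mu n$ is convex in $n$, so it attains its minimum at a unique $n^{\star}(\beta,\mu)$ characterized by $\partial f_{C}/\partial n|_{n^{\star}}=\mu$. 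Laplace's method (replacing the sum by its maximum term up to subexponential corrections) then yields the standard Legendre relation between the grand potential density and the canonical free energy density.

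Next I would control the probability distribution $p(N)\equiv e^{\beta\mu N}Z_{C}(\beta,N)/Z_{\rm GC}(\beta,\mu)$. Writing $p(N)\propto \exp\{-\beta L\,g_{\mu}(n)+o(L)\}$ and expanding $g_{\mu}$ around its unique minimum $n^{\star}$ to second order (strict convexity guarantees a strictly positive curvature $g_{\mu}''(n^{\star})>0$ in the generic case, which suffices outside of phase coexistence), I obtain a Gaussian envelope of width $\Delta n = O(1/\sqrt{L})$, equivalently $\Delta N=O(\sqrt{L})$. Thus the relative fluctuation $\Delta N/N$ vanishes in the thermodynamic limit, and the chemical potential fixed by $\langle N\rangle_{\rm GC}=Ln^{\star}$ selects exactly the density used in the canonical ensemble.

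Finally, the average energy per site in the grand canonical ensemble is $\langle E\rangle_{\rm GC}/L=\sum_{N}p(N)\,\langle E\rangle_{C}(\beta,N)/L$. Since $\langle E\rangle_{C}(\beta,N)/L$ is a smooth function of $n=N/L$ in the thermodynamic limit (it equals $\partial(\beta f_{C})/\partial\beta$), and $p(N)$ concentrates on $n^{\star}$ with width $O(1/\sqrt{L})$, the sum converges to $\langle E\rangle_{C}(\beta,n^{\star})/L$, proving the claim. The delicate step — and the main obstacle — is justifying the Gaussian concentration: if $g_{\mu}$ fails to be strictly convex at $n^{\star}$, the width of $p(N)$ can become anomalously large, a situation corresponding physically to a first-order phase transition where the ensembles may differ even in the thermodynamic limit. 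This is precisely why the hypothesis of convex energies (understood as strict convexity away from coexistence) is essential, and I would flag this caveat rather than attempt to handle it in full generality.
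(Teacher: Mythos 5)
Your argument is the standard Laplace/equivalence-of-ensembles proof, and it is correct at the level of rigor appropriate here. Note, however, that the paper itself does not print a proof of this Observation: it defers entirely to the supplemental material, so there is nothing in the provided source to compare against line by line; your route is almost certainly the intended one. Two small remarks. First, for the stated conclusion about average energies you do not actually need the Gaussian (second-order) expansion of $g_{\mu}$: strict convexity alone gives $g_{\mu}(n)-g_{\mu}(n^{\star})\geq\delta(\epsilon)>0$ for $|n-n^{\star}|>\epsilon$, so the weight $p(N)$ outside any fixed neighborhood of $n^{\star}$ is exponentially suppressed in $L$, and continuity of $\langle E\rangle_{C}(\beta,N)/L$ in $n$ then finishes the argument; this sidesteps the need to control the $o(L)$ corrections uniformly against a curvature term that is only $O(1)$ at the scale $\Delta n\sim L^{-1/2}$, which is the one technically fragile step in your write-up. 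Second, your caveat about failure of strict convexity at phase coexistence is well placed and is exactly the content of the paper's hypothesis of ``convex energies''; flagging it rather than resolving it is acceptable given that the statement assumes convexity.
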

\begin{proof}
The proof is given in the supplemental material at \url{https://web.ornl.gov/~gz1/papers/55/}.
\end{proof}

\section{Canonical entangler for the Heisenberg model} \label{App:AppendixB}
We now prove that the ``canonical'' entangler Hamiltonian for the Heisenberg model, Eq.~(\ref{HeisenbergC}),
has the property that its ground-state is Eq.~(\ref{MaxEntangledC}).
We begin by observing that the Hamiltonian (\ref{HeisenbergC}) conserves the 
$z$-component of the total spin in the physical and ancilla chains \emph{separately}, 
and assume that the combined physical plus ancilla system has $S_z=0$, so that 
the number of up and down spins are equal in the
combined system. Let
\begin{gather}
\begin{aligned}
\mathcal{T}_G=&\{|\phi\rangle; \text{basis } |\phi\rangle\text{ with all physical and ancilla}\\
&\text{ sites correctly paired}\}\\
\mathcal{T}_B=&\{|\phi\rangle; \text{basis } |\phi\rangle\text{ with at least one physical site}\\
&\text{ with ancilla incorrectly paired}\},
\end{aligned}
\end{gather}
where $\mathcal{T}_G$ is the set of states in which each site is correctly paired with
ancilla according to the scheme $\uparrow\rightarrow\downarrow$,
$\downarrow\rightarrow\uparrow$,
and $\mathcal{T}_B$ the set where at least one site is incorrectly paired.

It is easy to prove (i) that 
$\langle \phi'|H^{spin}_{\text{C}}|\phi\rangle=0$ if $|\phi'\rangle\in\mathcal{T}_G$
and $|\phi\rangle\in\mathcal{T}_B$,
so that the Hamiltonian matrix $H^{spin}_{\text{C}}$ blocks into at least
two blocks: states in $\mathcal{T}_G$ and states in $\mathcal{T}_B$.
To this aim, it is sufficient to think of spins as mapped to hard-core bosons, 
$|\uparrow\rangle=|1\rangle$ 
$|\downarrow\rangle=|0\rangle$. One can imagine Eq.~(\ref{HeisenbergC}) as a 
tight-binding Hamiltonian of ``singlets''
of physical sites and ancillas built along the rungs, singlets that
are exchanging positions of particles and holes. Because Eq.~(\ref{HeisenbergC}) conserves
 the number of
 bosons in the physical and ancilla chain separately,  it cannot, 
therefore, connect states in $\mathcal{T}_G$ (number of bosons and holes is equal to 
$L/2$ in the physical
chain)
with those in $\mathcal{T}_B$.
In fact, in the subspace
$\mathcal{T}_B$ at least one site of the chain has, in the hard-core 
bosonic representation, occupation $|0,0\rangle$ ($|\downarrow,\downarrow\rangle$), 
$|1,1\rangle$ ($|\uparrow,\uparrow\rangle$), 
bosons in sites like these cannot ``move''
by the action of  Eq.~(\ref{HeisenbergC}). As in the other cases, the subspace
$\mathcal{T}_B$ then blocks into even smaller subspaces, all of size smaller 
than the set $\mathcal{T}_G$. Within these subspaces we have hopping-like
matrices yielding lowest energies which are larger than those in the 
set $\mathcal{T}_G$.

To prove (iii), let us call $H$ the block of matrix Eq.~(\ref{HeisenbergC}) in the
$\mathcal{T}_G$ subspace.
The rank of $H$ is $C^L_{L/2}$ and its matrix elements are either 0 or -1. 
Each row has exactly $L^2/4$ non-zero entries. So does each column. Then its lowest eigenvalue is
$-L^2/4$, with eigenstate $(1,1,\cdots,1)$, that is,  
Eq.~(\ref{MaxEntangledC}).\\
\section{Canonical entangler for the Hubbard model} \label{App:AppendixC}
We now prove that the ``canonical'' entangler Hamiltonian of type C2 for the Hubbard model,
Eq.~(\ref{hami_canonical_HubC2}),
has the property that its ground-state is Eq.~(\ref{MaxEntangledCHub2}).
We begin by observing that the Hamiltonian (\ref{hami_canonical_HubC2}) conserves the 
number of electrons and $z$-component of the total spin in the physical and ancilla chains \emph{separately}, 
and assume that the combined physical plus ancilla system has $S_z=0$, so that 
the number of up and down electrons are equal in the
combined system. As done in Eq.~\ref{goodbad}, let
\begin{gather}
\begin{aligned}
\mathcal{W}_G=&\{|\phi\rangle; \text{basis } |\phi\rangle\text{ with all physical and ancilla}\\
&\text{ sites correctly paired}\}\\
\mathcal{W}_B=&\{|\phi\rangle; \text{basis } |\phi\rangle\text{ with at least one physical site}\\
&\text{ with ancilla incorrectly paired}\},
\end{aligned}
\end{gather}
where $\mathcal{W}_G$ is the set of states in which each site is correctly paired with
ancilla according to the scheme $0\rightarrow0$, $\uparrow\rightarrow\downarrow$,
$\downarrow\rightarrow\uparrow$, $\uparrow\downarrow\rightarrow\uparrow\downarrow$, 
and $\mathcal{W}_B$ the set where at least one site is incorrectly paired.

Even in this case, it is easy to prove (i) that 
$\langle \phi'|H^{Hub}_{\text{C2}}|\phi\rangle=0$ if $|\phi'\rangle\in\mathcal{W}_G$
and $|\phi\rangle\in\mathcal{W}_B$,
so that the Hamiltonian matrix $H^{Hub}_{\text{C2}}$ blocks into at least
two blocks: states in $\mathcal{W}_G$ and states in $\mathcal{W}_B$.
Indeed, it is sufficient to imagine Eq.~(\ref{hami_canonical_HubC2}) as a 
tight-binding Hamiltonian of generalized ``singlets'
of physical sites and ancillas built along the rungs

\begin{equation}
|\psi_{\rm{rung~singlet}}\rangle =|0,0\rangle+|\uparrow\downarrow,\uparrow\downarrow\rangle + 
|\uparrow,\downarrow\rangle+ |\downarrow,\uparrow\rangle.
\end{equation}

Because Eq.~(\ref{hami_canonical_HubC2}) conserves the number of electrons in the physical and ancilla
chain separately,  it cannot, 
therefore, connect states in $\mathcal{W}_G$
with those in $\mathcal{W}_B$. In fact, in the subspace
$\mathcal{W}_B$ at least one site of the chain has occupation $|\sigma,0\rangle$, $|0,\sigma\rangle$, 
$|\uparrow\downarrow,\sigma\rangle$, $|\sigma,\uparrow\downarrow\rangle$ or $|\sigma,\sigma\rangle$.
Electrons in sites like these cannot ``move''
by the action of  Eq.~(\ref{hami_canonical_HubC2}). As in the other cases, the subspace
$\mathcal{W}_B$ then blocks into even smaller subspaces, all of size smaller 
than the set $\mathcal{W}_G$. Within these subspaces we have hopping-like
matrices yielding lowest energies which are larger than those in the 
set $\mathcal{W}_G$.

To prove (iii), let us call $H$ the block of matrix Eq.~(\ref{hami_canonical_HubC2}) in the
$\mathcal{S}_G$ subspace.
It can be written as a direct product 
$H=H_\uparrow\otimes H_\downarrow$. $H_\uparrow=H_\downarrow$, each has rank $C^L_{N/2}$, and matrix elements either 0 or -1. 
Each row has exactly $N(L-N/2)/2$ non-zero entries. So does each column. Then the lowest eigenvalue of $H$ is
$-N(L-N/2)$, with eigenstate $(1,1,\cdots,1)$, that is,  
Eq.~(\ref{MaxEntangledCHub2}).
\bibliography{biblio}

\end{document}